\newcommand{\N}{\mathbb{N}}
\newcommand{\Q}{\mathbb{Q}}
\newcommand{\R}{\mathbb{R}}
\newcommand{\BAfL}{\mathsf{BAfL}}
\newcommand{\AfL}{\mathsf{AfL}}
\newcommand{\AfLn}{\mathsf{AfL}^{\neq}}
\newcommand{\AfLe}{\mathsf{AfL}^{=}}
\newcommand{\AfLQ}{\mathsf{AfL}_{\mathbb{Q}}}
\newcommand{\AfLA}{\mathsf{AfL}_{\mathbb{A}}}
\newcommand{\SL}{\mathsf{SL}}
\newcommand{\SLQ}{\mathsf{SL}_{\mathbb{Q}}}
\newcommand{\SLn}{\mathsf{SL}^{\neq}}
\newcommand{\SLe}{\mathsf{SL}^=}
\newcommand{\SLeQ}{\mathsf{SL}^=_{\mathbb{Q}}}
\newcommand{\Log}{\mathsf{L}}
\newcommand{\PSPACE}{\mathsf{PSPACE}}
\newcommand{\sgn}{\operatorname{sgn}}
\newcommand{\re}{\operatorname{Re}}
\newcommand{\lcm}{\operatorname{lcm}}
\newcommand{\ord}{\operatorname{ord}}
\def\abs#1{\left\vert #1\right\vert}
\def\L{\ensuremath{\Log}}
\def\res{\operatorname{Res}}
\def\diag{\operatorname{diag}}
\def\E{\mathbb E}
\setlist[itemize]{topsep=0pt,itemsep=-1ex,partopsep=1ex,parsep=1ex}
\newcommand{\EquationSpaceGlobal}{6pt}
\newtheorem{prop}[theorem]{Property}
\begin{document}

\title{Computational Limitations of Affine Automata}

\author{Mika Hirvensalo\inst{1} \and
Etienne Moutot\inst{1,2}\orcidID{0000-0003-2073-4709} \and
Abuzer Yakary\i lmaz\inst{3}\orcidID{0000-0002-2372-252X}}
\authorrunning{M. Hirvensalo, E. Moutot, and A, Yakary\i lmaz}
%
\institute{
Department of Mathematics and Statistics, University of Turku, FI-20014 Turku, Finland\\
\email{mikhirve@utu.fi}
\and
{LIP, ENS de Lyon – CNRS – UCBL – Universit\'e de Lyon
, \'Ecole Normale Sup\'erieure de Lyon, Lyon, France\\ \email{etienne.moutot@ens-lyon.org}}
\and
Center for Quantum Computer Science, Faculty of Computing\\University of Latvia, R\={\i}ga, Latvia \\
\email{abuzer@lu.lv}
}
\maketitle              

\begin{abstract}
    We present two new results on the computational limitations of affine automata. First, we show that the computation of bounded-error rational-values affine automata is simulated in logarithmic space. Second, we give an impossibility result for algebraic-valued affine automata. As a result, we identify some unary languages (in logarithmic space) that are not recognized by algebraic-valued affine automata with cutpoints.
\end{abstract}

\section{Introduction}

Finite automata are an interesting model to study since they express the very natural limitation of finite memory.
They are also good computational models, since they are simpler than many others machines like pushdown automata or Turing machines. Due to this simplicity, there exists many different models of finite automata, all trying to express different computational settings. Deterministic \cite{Sipser}, probabilistic \cite{Paz} and quantum \cite{Ambainis2015} finite automata (DFAs, PFAs, and QFAs, respectively) have been studied to try to understand better the computational limitations inherent to all these cases.

Recently, Díaz-Caro and Yakary\i lmaz introduced a new model, called {\em affine computation} \cite{Yaka16}. As a non-physical model, the goal of affine computation is to investigate the power of interference caused by negative amplitudes in the computation, like in the quantum case.
But unlike QFAs, affine finite automata (AfAs) have unbounded state set and the final operation corresponding to quantum measurement cannot be interpreted as linear.
The final operation in AfAs is analogous to renormalization in Kondacs-Watrous \cite{KW97} or Latvian \cite{Ambainis06} quantum automata models.

AfAs and their certain generalizations have been investigated in a series of works \cite{Yaka16,Yaka16-2,Yaka17,HiMoYa}. In most of the cases, affine models (e.g., bounded-error and unbouded-error AfAs, zero-error affine OBDDs, zero-error affine counter automata, etc.) have been shown more powerful than their classical or quantum counterparts. On the other hand, we still do not know too much regarding the computational limitations of AfAs. Towards this direction, we present two new results. First, we show that the computation of bounded-error rational-values affine automata is simulated in logarithmic space, and so we answer positively one of the open problems in \cite{Yaka16}. Second, we give an impossibility result for algebraic-valued AfAs, and, as a result, we identify some unary languages (in logarithmic space) that are not recognized by algebraic-valued AfAs with cutpoints.

\section{Preliminaries}
\label{section:def}

For a given word $ w $, $ w_i $ represents its $i$-th letter. For any given class $ \sf C $, $ \mathsf{C}_{\mathbb{Q}} $ and $ \mathsf{C}_{\mathbb{A}} $ denotes the classes defined by the machines restricted to have rational-valued and algebraic-valued components, respectively. The logarithmic and polynomial space classes are denoted as $ \Log $ and $\PSPACE$, respectively. We assume that the reader is familiar with the basics of automata theory.

\subsection{Models}
As a {\em probability distribution} (also known as a {\em stochastic vector}) we understand a (column) vector with nonnegative entries summing up to one, and a {\em stochastic matrix} (also known as a {\em Markov matrix}) here stands for a square matrix whose all columns are probability distributions.

\begin{definition}[PFA]
A $k$-state probabilistic finite automaton (PFA) $P$ over alphabet $\Sigma$ is a triplet
\[
P = (\vec{x},\{M_i\mid i\in\Sigma\},\vec{y})
\]
where $\vec{x}\in\mathbb R^k$ is a stochastic vector called {\em initial distribution}, each $M_i\in\mathbb R^{k\times k}$ is a stochastic matrix, and $\vec{y}\in\mathbb\{0,1\}^k$ is the final vector (each 1 in $\vec{y}$ represents an accepting state).
\end{definition}

For any input word $w \in \Sigma^*$ with length $n$, $P$ has a probability distribution of states as follows:
$
M_w \vec{x} = M_{w_n} \cdots M_{w_1} \vec{x}.
$
The {\em accepting probability} corresponds to the probability of $P$ being in an accepting state after reading $w$, which is given by
\begin{equation}
    \label{finalvalue01}
    f_P(w) = \vec{y}^TM_{w^R} \vec{x}.
\end{equation}


Affine finite automaton (AfA) is a generalization of PFA allowing negative transition values. Only allowing negative values in the transition matrices does not add any power (generalized PFAs are equivalent to PFAs, see \cite{Turakainen1969}), but affine automata introduce also a non-linear behaviour. The automaton acts like a generalized probabilistic automaton until the last operation, which is a non-linear operation called a \textit{weighting operation}. 

\begin{definition}
A vector $\vec{v}\in\R^k$ is an affine vector if and only if its coordinates sums up to $1$. A matrix $M$ is an affine matrix if and only if all its columns are affine vectors.
\end{definition}
The following property is straightforward to verify, and it will ensure that affine automata are well defined.

\begin{prop}
If $M$ and $N$ are affine matrices, then $MN$ is also an affine matrix. In particular, if $\vec{v}$ is an affine vector, then $M\vec{v}$ is also an affine vector.
\end{prop}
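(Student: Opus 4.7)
The plan is to reformulate the affine condition in a single compact algebraic form and then reduce the matrix–matrix statement to the matrix–vector one. Write $\mathbf{1} \in \R^k$ for the all-ones column vector. Then a vector $\vec{v} \in \R^k$ is affine exactly when $\mathbf{1}^T \vec{v} = 1$, and a matrix $M$ is affine exactly when every column sums to $1$, i.e.\ $\mathbf{1}^T M = \mathbf{1}^T$ (this is just the row-vector form of "each column is an affine vector").

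I would first handle the vector case. Assuming $\mathbf{1}^T M = \mathbf{1}^T$ and $\mathbf{1}^T \vec{v} = 1$, associativity of matrix multiplication gives
\[
\mathbf{1}^T (M \vec{v}) \;=\; (\mathbf{1}^T M)\, \vec{v} \;=\; \mathbf{1}^T \vec{v} \;=\; 1,
\]
so $M\vec{v}$ is affine. (If one prefers an index-level verification, expand $\sum_i (M\vec v)_i = \sum_i \sum_k M_{ik} v_k$ and switch the order of summation to use $\sum_i M_{ik} = 1$ for each $k$.)

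The matrix case then follows immediately: $\mathbf{1}^T (MN) = (\mathbf{1}^T M) N = \mathbf{1}^T N = \mathbf{1}^T$, so every column of $MN$ sums to $1$, i.e.\ $MN$ is affine. Equivalently, one can observe that the $j$-th column of $MN$ is $M$ applied to the $j$-th column of $N$, which is affine by hypothesis, and then invoke the vector statement column by column.

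There is no real obstacle here: the only substantive step is recognizing that "columns sum to $1$" is captured by left-multiplication by $\mathbf{1}^T$, after which everything reduces to the associativity identity $(\mathbf{1}^T M)N = \mathbf{1}^T(MN)$. The two halves of the property are therefore a single two-line computation.
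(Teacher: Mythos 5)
Your proof is correct and clean; the observation that the affine condition is exactly $\mathbf{1}^T M = \mathbf{1}^T$ reduces both claims to associativity. The paper itself declares the property ``straightforward to verify'' and gives no proof, so there is nothing to compare against, but this is the standard argument one would supply.
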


\begin{definition}[AfA]
  \label{def:AfA}
  A $k$-state {\em AfA} $A$ over alphabet $\Sigma$ is a triplet
  \[
	A = (\vec{x} ,\{M_i ~|~ i\in\Sigma\}, F)
	\]
where $\vec{x}$ is an initial affine vector, each $M_i$ is an affine transition matrix, and $F=\diag(\delta_1,\ldots,\delta_n)$ is the final projection matrix, where each $\delta_i\in\{0,1\}$ for $1\leq i \leq n$.
\end{definition}

The value computed by an affine automaton can be most conveniently be defined via the following notion:

\begin{definition}
Notation $|\vec{v}| = \sum_i |v_i|$ stands for the usual $L^1$ norm.
\end{definition}

Now, the final value of the affine automaton $A$ of Definition \ref{def:AfA} is
\begin{equation}
    \label{finalvalue03}
    f_A(w) = \frac{|F M_w\vec{v}_0|}{|M_w \vec{v}_0|}.
\end{equation}
Clearly $f_A(w)\in[0,1]$ for any input word $w\in\Sigma^*$.

\begin{remark}
    Notice that the final value for PFAs (\ref{finalvalue01}) is defined as matrix product $\vec{v}_f\mapsto\vec{y}^T\vec{v}_f$, which is a linear operation on $\vec{v}_f$. On the other hand, computing final value from $\vec{v}_f$ as in (\ref{finalvalue03}) involves nonlinear operations $\displaystyle\vec{v}_f\mapsto\frac{|F\vec{v}_f|}{|\vec{v}_f|}$ such as $L^1$-norm and normalization (division).
\end{remark}

\subsection{Cutpoint languages}
Given a function $f:\Sigma^*\to[0,1]$ computed by an automaton (stochastic or affine), there are different ways of defining the language of recognized by this automaton.

\begin{definition}[Cutpoint languages]
  \label{def:SL}
  A language $L\subseteq\Sigma^*$ is recognized by an automaton $A$ with cutpoint $\lambda \in [0,1) $ if and only if
  \[ L = \{ w\in\Sigma^* ~|~ f_A(w) > \lambda \} .\]
These languages are called cutpoint languages.
In the case of probabilistic (resp., affine automata), the set of cut-point languages are called {\em stochastic languages} (resp., {\em affine languages}) and denoted by $\SL$ (resp., $ \AfL $).
\end{definition}
We remark that fixing the cutpoint in the interval $ (0,1) $ does not change the classes $ \SL $ and $ \AfL $ \cite{Paz,Yaka16}.

\begin{definition}[Exclusive cutpoint languages]
  \label{def:SLn}
  A language $L\subseteq\Sigma^*$ is recognized by an automaton $A$ with exclusive cutpoint $\lambda \in [0,1] $ if and only if
  \[ L = \{ w\in\Sigma^* ~|~ f_A(w) \neq \lambda \} .\]
These languages are called exclusive cutpoint languages.
In the case of probabilistic (resp., affine automata), the set of exclusive cut-point languages are called {\em exclusive stochastic languages} (resp., {\em exclusive affine languages}) and denoted by $\SLn$ (resp., $ \AfLn $). The complement of $\SLn$ (resp., $ \AfLn $) is $ \SLe $ (resp., $\AfLe$).
\end{definition}
Again, we remark that fixing the cutpoint in the interval $ (0,1) $ does not change the classes $ \SLn $, $\SLe$, $ \AfLn $, and $\AfLe$ \cite{Paz,Yaka2010,Yaka16}.

A stronger condition is to impose that accepted and rejected words are separated by a gap: the cutpoint is said to be isolated.

\begin{definition}[Isolated cutpoint or bounded error]
\label{def:BSL}
A language $L$ is recognized by an automaton $A$ with {\em isolated cutpoint} $\lambda$ if and only if there exist $\delta>0$ such that $\forall w\in L, f_A(w) \geq \lambda+\delta$, and
$\forall w\notin L, f_A(w) \leq \lambda-\delta$.
The set of languages recognized with {\em bounded error} (or isolated cutpoint) affine automata is denoted by $\BAfL$.
\end{definition}

A classical result by Rabin \cite{Rabin} shows that isolated cutpoint stochastic languages are regular. Rabin's proof essentially relies on two facts: 1) the function mapping the final vector into $[0,1]$ is a contraction, and 2) the state vector set is bounded.
By modifying Rabin's proof, it is possible to show that also many quantum variants of stochastic automata obey the same principle \cite{Ambainis2015}: bounded-error property implies the regularity of the accepted languages. In fact, E. Jeandel generalized Rabin's proof by demonstrating that the compactness of the state vector set together with the continuity of the final function are sufficient to guarantee the regularity of the accepted language if the cutpoint is isolated \cite{Jeandel2007}.


\section{Logarithmic simulation}
\label{sec:log-simulation}

Macarie \cite{Macarie1998} proved that $\SLeQ \subseteq \Log $ and $\SLQ \subseteq\Log $. That is, the computation of any rational-valued probabilistic automaton can be simulated by an algorithm using only logarithmic space.
However, this logarithmic simulation cannot be directly generalized for rational-valued affine automata due to the non-linearity of their last operation.
In order to understand why, we will first reproduce the proof.

Before that, let us introduce the most important space-saving technique:

\begin{definition} Notation $(b\mod c)$ stands for the least nonnegative integer $a$ satisfying $a\equiv b\pmod c$. If $\vec{x}=(x_1,\ldots,x_r)$ and $\vec{n}=(n_1,\ldots, n_r)\in\mathbb Z^r$, we define $\vec{x}\pmod{\vec{n}}=((x_1\!\!\!\mod n_1),\ldots,(x_r\!\!\!\mod n_r))$. Analogously, for any matrix $A\in\mathbb Z^{k\times k}$, we define $(A\pmod{n})_{ij}=(A_{ij}\mod n)$.

\end{definition}

The problem of recovering $x$ from the residue representation $ ((x\!\!\!\mod n_1), \ldots , $ $ (x\!\!\!\mod n_r)  ) $ is practically resolved by the following well-known theorem.
\begin{theorem}[The Chinese Remainder Theorem]
Let $n_1,\ldots,n_r$ be pairwise coprime integers, $a_1,\ldots,a_r$ be arbitrary integers, and $N=n_1\cdots n_r$. Then there exists an integer $x$ such that
\begin{equation}\label{eq04}
x\equiv a_1\pmod{n_1},\ldots, x\equiv a_r\pmod{n_r},
\end{equation}
and any two integers $x_1$ and $x_2$ satisfying (\ref{eq04}) satisfy also $x_1\equiv x_2\pmod{N}$.
\end{theorem}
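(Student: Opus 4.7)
The plan is to separate existence and uniqueness. For existence, I would give the standard explicit construction: for each index $i$, set $N_i = N/n_i = \prod_{j \neq i} n_j$. Since each $n_j$ with $j \neq i$ is coprime to $n_i$, so is their product, hence $\gcd(N_i, n_i) = 1$, and Bézout's identity yields an integer $M_i$ with $N_i M_i \equiv 1 \pmod{n_i}$. Then defining $x = \sum_{i=1}^{r} a_i N_i M_i$ should do the job: reducing modulo any fixed $n_k$, every term with $i \neq k$ vanishes because $n_k \mid N_i$, while the $i=k$ term becomes $a_k N_k M_k \equiv a_k \pmod{n_k}$. Hence $x \equiv a_k \pmod{n_k}$ for every $k$, as required.

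For uniqueness, suppose $x_1$ and $x_2$ both satisfy (\ref{eq04}). Subtracting the congruences gives $n_i \mid x_1 - x_2$ for every $i$. I would then argue by induction on $r$ that pairwise coprime divisors multiply to a divisor: if $n_1 \cdots n_{r-1}$ and $n_r$ both divide $x_1 - x_2$ and $\gcd(n_1 \cdots n_{r-1}, n_r) = 1$ (a consequence of pairwise coprimality, shown by iterated application of Euclid's lemma), then $N = n_1 \cdots n_r \mid x_1 - x_2$. This is precisely $x_1 \equiv x_2 \pmod{N}$.

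There is essentially no obstacle here; the only nontrivial ingredient is Bézout's identity, invoked once to produce the inverses $M_i$. An alternative packaging would observe that the map $\mathbb{Z}/N\mathbb{Z} \to \prod_{i=1}^{r} \mathbb{Z}/n_i\mathbb{Z}$ sending an integer to its residue tuple is an injective ring homomorphism between finite sets of equal cardinality, hence a bijection, which gives both parts simultaneously. However, the constructive version above is preferable in the present context, since \cref{sec:log-simulation} will rely on the explicit reconstruction of $x$ from its residues as the core space-saving trick, so one wants the formula $x = \sum_i a_i N_i M_i$ in hand rather than an abstract isomorphism.
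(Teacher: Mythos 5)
Your proof is correct, but note that the paper does not actually prove the Chinese Remainder Theorem---it simply states it as a well-known classical result (prefaced by ``is practically resolved by the following well-known theorem'') and proceeds to use it. So there is no proof in the paper to compare against; your job here was effectively to supply the standard argument, which you did correctly: the explicit construction $x = \sum_i a_i N_i M_i$ with $N_i = N/n_i$ and $M_i$ a modular inverse of $N_i$ modulo $n_i$, plus the pairwise-coprimality-to-full-product divisibility argument for uniqueness.

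One small correction to your closing remark: the logarithmic-space simulation in Section~\ref{sec:log-simulation} does \emph{not} explicitly reconstruct $x$ from its residues using the formula $x = \sum_i a_i N_i M_i$ (that reconstruction would generally overflow logarithmic space, since $x$ itself can have $\Theta(n)$ bits). What the algorithm actually uses is the \emph{uniqueness} half of the theorem: it compares two quantities residue-by-residue, reusing $O(\log p_r)$ space for each prime, and invokes CRT only to conclude that agreement modulo every $p_i$ together with both quantities lying in $[0, P_r - 1]$ forces equality. The explicit inverse formula is never evaluated. So the abstract bijection $\mathbb{Z}/N\mathbb{Z} \to \prod_i \mathbb{Z}/n_i\mathbb{Z}$ that you mention as an ``alternative packaging'' is in fact closer in spirit to how the theorem is deployed later, though of course either proof establishes the statement.
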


\begin{remark} The above remarks and the Chinese Remainder Theorem imply that the integer ring operations $(+,\cdot)$ can be implemented using the residue representation, and that the integers can be uncovered from the residue representations provided that 1) $\vec{n}=(n_1,\ldots, n_r)$ consists of pairwise coprime integers and 2) the integers stay in interval of length $N-1$, where $N=n_1\cdots n_r$.
\end{remark}

\begin{remark} In order to ensure that $\vec{n}=(n_1,\ldots,n_r)$ consists of pairwise coprime integers, we select numbers $n_i$ from the set of prime numbers. For the reasons that will become obvious later, we will however omit the first prime $2$.
\end{remark}

\begin{definition}
    $\vec{p}_r$ is an $r$-tuple $\vec{p}_r=(3,5,7,\ldots, p_r)$ consisting of $r$ first primes by excluding $2$. For this selection, a consequence of the prime number theorem is that, asymptotically, $P_r=3\cdot 5 \cdot 7 \cdot ~  \cdots ~ \cdot p_{r}=\frac12e^{(1+o(1))r\ln r}$.
\end{definition}

\begin{theorem}[Macarie
    \cite{Macarie1998}]
    \label{thm03}
    $\SLeQ \subseteq \Log $
\end{theorem}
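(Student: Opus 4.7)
The plan is to reduce the predicate $f_P(w) \ne \lambda$ to testing whether a single integer is nonzero, and then to use the Chinese Remainder representation introduced above to perform that test in $\Log$. Write $\lambda = a/b$ in lowest terms and let $d$ be the $\lcm$ of the denominators of the entries of $\vec x$ and of all $M_i$. Setting $M_i' := d M_i$ and $\vec v_0' := d \vec x$, the vector $\vec v_n' := M_{w_n}' \cdots M_{w_1}' \vec v_0'$ is integer and satisfies $M_w \vec x = \vec v_n' / d^{n+1}$. Hence $f_P(w) \ne \lambda$ is equivalent to
\[
x \;:=\; b\,\vec y^T \vec v_n' \;-\; a\, d^{n+1} \;\ne\; 0.
\]
A routine induction on the length of the prefix read so far gives the bound $|x| \le C^n$ for some constant $C = C(P,\lambda)$ depending only on $k$, $d$, and the largest absolute entry among the $M_i'$, $\vec v_0'$, $\vec y$.

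I would then choose $r$ just large enough that $P_r > 2 C^n$; by the asymptotic $P_r = \tfrac12 e^{(1+o(1)) r \ln r}$ already recalled, any $r$ linear in $n/\log n$ suffices, and in particular each $p_i \in \vec p_r$ fits in $O(\log n)$ bits. By the Chinese Remainder Theorem, under this choice $x = 0$ if and only if $x \equiv 0 \pmod{p_i}$ for every $p_i$ in $\vec p_r$, so testing $x \ne 0$ amounts to exhibiting some $p_i$ for which $x \not\equiv 0 \pmod{p_i}$.

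The $\Log$ machine first computes $n = |w|$ by scanning $w$, then fixes $r := \lceil c n / \log n \rceil$ for the appropriate constant $c$, and iterates $i$ from $1$ to $r$; each $p_i$ is generated on the fly by enumerating candidates and primality-testing by trial division, which is in $\Log$. For each such $p_i$ the machine maintains a vector $\vec u \in \{0,\ldots,p_i-1\}^k$, initialized to $\vec v_0' \pmod{p_i}$, and for $t = 1,\ldots,n$ performs the in-place update $\vec u \leftarrow M_{w_t}' \vec u \pmod{p_i}$; it then evaluates $b\,\vec y^T \vec u - a\, d^{n+1} \pmod{p_i}$, handling $d^{n+1} \pmod{p_i}$ by fast modular exponentiation, and accepts as soon as some residue is nonzero, rejecting if all $r$ residues vanish. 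Every quantity stored — the index $i$, the current prime $p_i$, the $k$ coordinates of $\vec u$, the input head position, and the exponentiation counter — fits in $O(\log p_r) = O(\log n)$ bits, giving total space $O(\log n)$. The main delicate point is pinning down a clean exponential bound on $|x|$ so that the prime-number-theorem estimate for $P_r$ actually certifies that $O(n/\log n)$ primes are enough to detect nonzeroness; once that bound is established the rest of the argument is a standard CRT simulation of integer matrix multiplication.
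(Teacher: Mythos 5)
Your proof is correct and follows essentially the same CRT-based residue-arithmetic strategy as the paper: rescale to integer matrices, reduce membership to a comparison of bounded integers, and decide that comparison by sequentially testing residues modulo a family of small odd primes, all in $O(\log n)$ space. The only departures are cosmetic --- you keep a general cutpoint $\lambda = a/b$ rather than normalizing to $1/2$, you bound $|x|$ by a crude entrywise product estimate rather than exploiting stochasticity of the $M_i$ (which gives $0 \le \vec y^T \vec v_n' \le d^{n+1}$ immediately), and your machine as written recognizes $\{w : f_P(w) \neq \lambda\}$ rather than $\{w : f_P(w) = \lambda\}$, which is harmless since deterministic $\Log$ is closed under complement.
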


\begin{proof}
    For a given alphabet $ \Sigma $, let $L \in \Sigma^* $ be a language in $ \SLeQ $ and $P=(\vec{x},\{M_i\mid i\in\Sigma\},\vec{y})$ be a $k$-state rational-valued PFA over $\Sigma$ such that
\[
    L=\left\{ w\in\Sigma^*\mid f_P(w)=\frac12 \right\}.
\]
We remind that, for any input word $w=w_1\cdots w_n\in\Sigma^*$, we have
\begin{equation}
    \label{eq_pa1}
    f_P(w)=\vec{y}^T M_{w_n}\cdots M_{w_1}\vec{x}.
\end{equation}

Since each $M_i\in\mathbb Q^{k\times k}$, there exists a number $D\in \mathbb N$ providing that each
matrix $M_i'=DM_i\in\mathbb Z^{k\times k}$, and (\ref{eq_pa1}) can be rewritten as
\[
f_P(w)=\frac{1}{D^n}\underbrace{\vec{y}^T M_{w_n}'\ldots M_{w_1}'\vec{x}}_{f_{P'}(w)},
\]
and the language $L$ can be characterized as
\begin{equation}\label{eq03}
L=\{w\in\Sigma^*\mid 2f_{P'}(w)=D^n\}.
\end{equation}
Since the original matrices $M_i$ are stochastic, meaning that their entries are in $[0,1]$, it follows that each matrix $M_i'=D M_i$ has integer entries in $[0,D]$. Moreover, $f_P(w)\in[0,1]$ implies that
$f_{P'}(w)\in[0,D^n]$ for every input word $w\in\Sigma^n$.
As now $f_{P'}(w)$ can be computed by multiplying $k\times k$ integer matrices, the residue representation will serve as a space-saving technique.

We will fix $r$ later, but the description of the algorithm is as follows: For each entry $p$ of $\vec{p}_r=(3,5,7,\ldots, p_r)$, we let $M_i^{(p)}=M_i'\mod p$, and compute
\begin{equation}
    \label{eq20}
    (2f_{P'}(w)\!\!\!\mod p)=\vec{y}^TM_{w_n}^{(p)}\cdots M_{w_1}^{(p)}\vec{x}
\end{equation}
as all the products are computed modulo $p$, $k^2\log p$ bits are needed to compute (\ref{eq20}). Likewise, $(D^n\!\!\!\mod p)$ can be computed in space $O(\log p)$ for each coordinate $p$ of $\vec{p}_r$. The comparison $2f_{P'}(w)\equiv D^n \pmod p$ can hence done in $O(\log p)$ space.

Reusing the space, the comparison can be made sequentially for each coordinate of $\vec{p}_r$, and if any comparison gives a negative outcome, we can conclude that $2P'(w)\ne D^n$.

To conclude the proof, it remains to fix $r$ so that both $2f_{P'}(w)$ and $D^n$ are smaller than $P_r=3\cdot 5\cdot 7\cdot~\cdots~\cdot p_r$. If no congruence test is negative, then the Chinese Remainder Theorem ensures that $2f_{P'}(w)= D^n$.
Since $2f_{P'}(w)\le D^n$, we need to select $r$ so that
$
\frac12e^{(1+o(1))r\ln r}>2D^n,
$
which is equivalent to
$
\log\frac12+(1+o(1))r\ln r>\log 2+n\log D.
$
This inequality is clearly satisfied with $r=n$ for large enough $n$, and for each $n\ge 1$ by choosing $r=c\cdot n$, where $c$ is a positive constant (depending on $D$).

As a final remark let us note that $p_{\lfloor cn\rfloor}$, the $\lfloor cn\rfloor$-th prime, can be generated in logarithmic space and the prime number theorem implies that $O(\log n)$ bits are enough to present $p_{\lfloor cn\rfloor}$, since $c$ is a constant.
\qed\end{proof}

To extend the above theorem to cover $\SLQ$ as well, auxiliary results are used.

\begin{lemma}[Macarie \cite{Macarie1998}]
    \label{lemma01}
    If $N$ is an odd integer and $x$, $y\in [0,N-1]$ are also integers, then $x\ge y$ iff $x-y$ has the same parity as $((x-y)\!\!\! \mod N)$.
\end{lemma}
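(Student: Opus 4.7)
The plan is a direct case analysis on the sign of $x-y$, exploiting the fact that $x,y\in[0,N-1]$ forces $x-y\in[-(N-1),N-1]$, a window of length $2N-1$ inside which the reduction modulo $N$ can take at most two forms.

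First I would handle the case $x\ge y$. Then $x-y\in[0,N-1]$, so by definition $((x-y)\bmod N)=x-y$, and trivially the two quantities have the same parity. Next I would handle the case $x<y$. Then $x-y\in[-(N-1),-1]$, so $((x-y)\bmod N)=x-y+N\in[1,N-1]$. Since $N$ is odd, adding $N$ flips parity, so $x-y$ and $(x-y)\bmod N$ have opposite parities in this case. Combining the two cases yields the stated equivalence: the parities agree precisely when $x\ge y$.

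There is essentially no obstacle here; the only subtlety is checking that the bounds on $x,y$ really do restrict $x-y$ to a range where the reduction modulo $N$ differs from $x-y$ by at most one copy of $N$, and that the oddness of $N$ is exactly what makes adding one copy of $N$ detectable via parity. Thus the lemma reduces to a two-line verification after setting up these observations.
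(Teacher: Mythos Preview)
Your proposal is correct and follows essentially the same approach as the paper: a two-case analysis showing that $(x-y)\bmod N$ equals $x-y$ when $x\ge y$ and $x-y+N$ when $x<y$, with the oddness of $N$ forcing a parity flip in the latter case.
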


\begin{proof}
    As $x$, $y\in[0,N-1]$, it follows that
    \[
    (x-y\!\!\!\mod N)=\left\{
    \begin{array}{rl}
    x-y &\text{if $x\ge y$}\\
    N+x-y&\text{if $x<y$},
    \end{array}
    \right.
    \]
    which shows that the parity changes in the latter case since $N$ is odd.
\qed\end{proof}

The problem of using the above lemma is that, in modular computing, numbers $x$ and $y$ are usually known only by their residue representations $\res_{\vec{p}_r}(x)$ and $\res_{\vec{p}_r}(y)$, and it is not straightforward to compute the parity from the modular representation in logarithmic space. Macarie solved this problem not only for parity but also for a more general modulus (not necessarily equal to $2$).

\begin{lemma}
    [Claim modified from \cite{Macarie1998}]
    \label{lemma02}
    For any integer $x$ and modulus $\vec{p}_r=(3,5,7,\ldots, p_r)$, there is a deterministic algorithm that given $\res_{\vec{p}_r}(x)$ and $M\in\mathbb Z$ as input, produces the output $x\pmod M$ in space $O(\log p_r+\log M)$
\end{lemma}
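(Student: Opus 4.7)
The plan is to use the Chinese Remainder Theorem's reconstruction formula and reduce everything modulo $M$ on the fly. Let $c_i=(P_r/p_i)^{-1}\bmod p_i$ and $e_i=c_i\,(P_r/p_i)$, so that $x\equiv\sum_{i=1}^{r}a_ie_i\pmod{P_r}$. Setting $S=\sum_ia_ie_i\in[0,rp_rP_r)$, there is a unique $k\in[0,rp_r)$ with $x=S-kP_r$, namely $k=\bigl\lfloor\sum_ia_ic_i/p_i\bigr\rfloor$, and so
\[
x\bmod M\;\equiv\;\Bigl(\textstyle\sum_i a_i(e_i\bmod M)\Bigr)-(k\bmod M)(P_r\bmod M)\pmod{M}.
\]
Hence it suffices to produce $(e_i\bmod M)$, $(P_r\bmod M)$ and $(k\bmod M)$ within the claimed space.

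The factors $(P_r\bmod M)$ and $(P_r/p_i\bmod M)$ are iterated products of at most $r$ factors each bounded by $p_r$, so they fit in $O(\log p_r+\log M)$ space by the standard log-space iterated-product simulation. The inverse $c_i$ is obtained by Fermat's little theorem as $c_i\equiv(P_r/p_i)^{p_i-2}\pmod{p_i}$ in $O(\log p_r)$ space. The outer accumulation $\sum_i a_i(e_i\bmod M)\bmod M$ fits in a single register, recomputing $e_i\bmod M$ from scratch at each $i$. The only non-routine piece is $(k\bmod M)$.

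To attack $k$, decompose $a_ic_i=q_ip_i+r_i$ with $r_i\in[0,p_i)$, giving $k=\sum_i q_i+\lfloor T\rfloor$ where $T=\sum_i r_i/p_i\in[0,r)$. The iterated sum $\sum_i q_i\bmod M$ again fits within the budget, so the real work is the small floor $\lfloor T\rfloor\in\{0,\ldots,r-1\}$. The \emph{main obstacle} is that $T=N/P_r$ for an integer $N$, so $T$ can lie within $1/P_r$ of an integer, and any direct finite-precision approximation of $T$ would need $\Omega(\log P_r)=\Omega(r\log r)$ bits, far above the budget. The plan is to determine the bits of $\lfloor T\rfloor$ one at a time: each threshold query ``$\lfloor T\rfloor\geq t$'' is recast as a comparison between two non-negative integers in $[0,P_r-1]$ that are directly expressible in the residue representation (essentially $N$ against $tP_r$), and \cref{lemma01} then reduces any such comparison to a single parity question on an integer whose residues are available. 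Since $k<rp_r$, only $O(\log p_r)$ bits of $\lfloor T\rfloor$ are ever required, and all sub-routines are arranged inside a single outer loop that reuses its workspace rather than stacking recursive calls, keeping the total working space within $O(\log p_r+\log M)$. The subtle remaining point is the parity base case, which is carried out using the CRT identity above together with the observation that $P_r$ and every $P_r/p_i$ are odd, so that parities of all relevant iterated sums can be read off directly from the residue representation.
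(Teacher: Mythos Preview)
The paper does not supply its own proof of this lemma; it is quoted from Macarie and used as a black box, so there is no in-paper argument to compare against. I can therefore only assess your attempt on its own merits.

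Your plan has a circularity that is not actually closed. To settle each threshold $\lfloor T\rfloor\ge t$ you invoke \cref{lemma01}, which requires \emph{two} parities: that of $N-tP_r$ and that of $(N-tP_r)\bmod P$ for the relevant odd modulus $P$. The first parity is indeed easy, because $N$, $t$ and $P_r$ are explicit expressions and every $P_r/p_i$ is odd. The second parity is the whole difficulty: the integer $(N-tP_r)\bmod P$ is available to you \emph{only} through its residues modulo odd primes, and recovering its parity from those residues is exactly the $M=2$ instance of the lemma you are proving. Your final sentence (``parities of all relevant iterated sums can be read off directly from the residue representation'') handles explicit sums such as $S$ and $N$, but it does not handle $(\,\cdot\,)\bmod P$: this quantity equals $N-tP_r$ plus either $0$ or the odd number $P$, and which of the two occurs is precisely the comparison you are trying to decide. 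The paper itself flags this dependency when it proves \cref{lemma03} by appealing to \cref{lemma02} with $M=2$ to obtain the parities needed for \cref{lemma01}; your argument has relocated that dependency but not removed it. (There is also a smaller issue: $N$ and $tP_r$ need not lie in $[0,P_r-1]$, so \cref{lemma01} is not directly applicable without first enlarging the modulus, but that is easily patched.) As written, the ``parity base case'' is asserted rather than established, and without it the whole threshold procedure for $\lfloor T\rfloor$ collapses.
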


As a corollary of the previous lemma, Macarie presented a conclusion which implies the logarithmic space simulation of rational stochastic automata.

\begin{lemma}
    [Claim modified from \cite{Macarie1998}]
    \label{lemma03}
    Let $\vec{p}_r=(3,5,7,\ldots, p_r)$ and $P_r=3\cdot 5\cdot 7 \cdot ~ \cdots ~ \cdot p_r$. Given the residue representations of
integers $x$, $y\in[0,P_r-1]$, the decisions $x>y$, $x=y$ or $x<y$ can be made in $O(\log p_r)$ space.
\end{lemma}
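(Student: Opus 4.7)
The plan is to reduce the three-way comparison to Macarie's parity criterion (Lemma~\ref{lemma01}) applied with $N = P_r$, which is legitimate precisely because the prime $2$ was deliberately excluded from $\vec{p}_r$, making $P_r$ odd. The workhorse for extracting the parities needed will be Lemma~\ref{lemma02} with $M = 2$, which by hypothesis already runs in $O(\log p_r)$ space.

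Concretely, I would first form $\res_{\vec{p}_r}(z)$ for $z = (x-y) \bmod P_r$ by componentwise subtraction modulo each $p \in \vec{p}_r$; since only one prime and a pair of residues are held at a time, this costs $O(\log p_r)$ space. Equality is detected on the fly: if every component of $\res_{\vec{p}_r}(z)$ is zero then $P_r \mid (x-y)$, and since $|x-y| < P_r$ this forces $x = y$. Otherwise $x \neq y$, and I then invoke Lemma~\ref{lemma02} with $M = 2$ three times: once on $\res_{\vec{p}_r}(z)$ to obtain $z \bmod 2 = ((x-y) \bmod P_r) \bmod 2$, and once each on $\res_{\vec{p}_r}(x)$ and $\res_{\vec{p}_r}(y)$ to obtain the bits $x \bmod 2$ and $y \bmod 2$ whose XOR is the parity of the signed integer $x - y$ (note that $x$ and $y$ themselves lie in $[0, P_r - 1]$, so applying Lemma~\ref{lemma02} to their residue representations recovers their true parities). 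By Lemma~\ref{lemma01}, $x \geq y$ iff these two parities agree, which together with the equality test completes the comparison.

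The space accounting is then immediate: every call to Lemma~\ref{lemma02} costs $O(\log p_r + \log 2) = O(\log p_r)$, the residue arithmetic costs $O(\log p_r)$, and sequential execution lets all steps reuse the same workspace. The one subtlety worth verifying is that Lemma~\ref{lemma02} applied to $\res_{\vec{p}_r}(z)$ genuinely yields $((x-y) \bmod P_r) \bmod 2$ — the parity of the reduced residue, not of the signed difference — so that the hypothesis of Lemma~\ref{lemma01}, which compares that quantity to the parity of the signed integer $x - y$, is exactly what we evaluate. Once this alignment is pinned down, the proof is a direct chaining of the two preceding lemmas.
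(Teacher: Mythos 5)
Your proof is correct and follows essentially the same route as the paper's: compute $\res_{\vec{p}_r}(z)$ for $z=(x-y)\bmod P_r$, extract the parities of $x$, $y$, $z$ via Lemma~\ref{lemma02} with $M=2$, and apply Lemma~\ref{lemma01} with $N=P_r$ (odd because $2$ was excluded). You simply make explicit what the paper leaves implicit — how the parity of the signed difference is obtained from the parities of $x$ and $y$, and the alignment of what Lemma~\ref{lemma02} actually returns with the hypothesis of Lemma~\ref{lemma01}.
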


\begin{proof}
    The equality test can be done as in the proof Theorem \ref{thm03}, testing the congruence sequentially for each prime. Testing $x\ge y$ is possible by lemmata \ref{lemma01} and \ref{lemma02}: First compute $\res_{\vec{p}_r}(z)=\res_{\vec{p}_r}(x)-\res_{\vec{p}_r}(y)\pmod{\vec{p}_r}$, then compute the parities of $x$, $y$, $z$ using Lemma \ref{lemma02} with $M=2$.
\qed\end{proof}

The following theorem is a straightforward corollary from the above:
\begin{theorem}
    $\SLQ \subseteq \Log$.
\end{theorem}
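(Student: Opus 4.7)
The plan is to adapt the equality-testing argument of \Cref{thm03} by replacing the final parity/congruence check with the three-way comparison of \Cref{lemma03}. Let $L \in \SLQ$, and take a rational-valued $k$-state PFA $P = (\vec{x}, \{M_i \mid i \in \Sigma\}, \vec{y})$ together with a cutpoint $\lambda$. Since fixing the cutpoint anywhere in $(0,1)$ yields the same class, I may choose a convenient rational value, say $\lambda = 1/2$, so that $L = \{w \mid f_P(w) > 1/2\}$.

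Next I would clear denominators exactly as in the proof of \Cref{thm03}: pick $D \in \mathbb{N}$ so that $M_i' = D M_i \in \mathbb{Z}^{k \times k}$, set $f_{P'}(w) = \vec{y}^T M_{w_n}' \cdots M_{w_1}' \vec{x}$, and observe that $f_{P'}(w)$ is a nonnegative integer bounded by $D^n$ for every $w \in \Sigma^n$. Then
\[
L = \{w \in \Sigma^* \mid 2 f_{P'}(w) > D^n\},
\]
so the membership problem reduces to comparing two nonnegative integers, each in $[0, 2D^n]$.

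I would then pick $r = c n$ for a suitable constant $c$ (depending on $D$) so that $P_r > 2D^n$, exactly as in \Cref{thm03}; this places both $2 f_{P'}(w)$ and $D^n$ inside $[0, P_r - 1]$, which is the regime in which \Cref{lemma03} applies. For each prime $p$ in $\vec{p}_r$, the residues $(2 f_{P'}(w) \!\mod p)$ and $(D^n \!\mod p)$ are computed by the same modular matrix-product routine used in \Cref{thm03}, fitting into $O(\log p_r) = O(\log n)$ workspace and reusing that space across primes. Feeding these residue representations to the comparison procedure of \Cref{lemma03} then yields the decision $2 f_{P'}(w) > D^n$ in $O(\log p_r) = O(\log n)$ space overall.

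The only point that deserves care is the interface between the sequential, space-reusing computation of individual residues and the comparison procedure of \Cref{lemma03}: the latter is stated as if its input residue representation were given explicitly, whereas here we can only recompute each coordinate on demand. I would therefore check that \Cref{lemma02}, which \Cref{lemma03} invokes with $M = 2$, accesses its input residue-by-residue and thus composes cleanly with an on-the-fly generator for $\res_{\vec{p}_r}(2 f_{P'}(w))$ and $\res_{\vec{p}_r}(D^n)$, preserving the $O(\log n)$ space bound. This bookkeeping is the only mildly delicate ingredient; once it is in place, the theorem follows as a direct corollary of \Cref{thm03} and \Cref{lemma03}.
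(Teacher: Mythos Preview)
Your proposal is correct and matches the paper's intended argument exactly: the paper gives no detailed proof for this theorem, stating only that it is ``a straightforward corollary from the above,'' i.e., from the machinery of \Cref{thm03} combined with the comparison procedure of \Cref{lemma03}. Your write-up spells out precisely that corollary, and your extra care about the on-demand residue interface with \Cref{lemma02}/\Cref{lemma03} is a detail the paper leaves implicit.
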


When attempting to prove an analogous result to affine automata, there is at least one obstacle: computing the final value includes the absolute values, but the absolute value is not even a well-defined operation in the modular arithmetic. For example, $2\equiv -3\pmod 5$, but $\abs{2}\not\equiv\abs{-3}\pmod 5$. This is actually another way to point out that, in the finite fields, there is no order relation compatible with the algebraic structure.

Hence for affine automata with matrix entries of both signs, another approach must be adopted. One obvious approach is to present an integer $n$ as a pair $(\abs{n},\sgn(n))$, and apply modular arithmetic to $\abs{n}$. The signum function and the absolute value indeed behave smoothly with respect to the product, but not with the sum, which is a major problem with this approach, since to decide the sign of the sum requires a comparison of the absolute values, which seems impossible without having the whole residue representation. The latter, in its turn seems to cost too much space resources to fit the simulation in logarithmic space.

Hence the logspace simulation for automata with matrices having both positive and negative entries seems to need another approach. It turns out that we can use the procedure introduced by Turakainen already in 1969 \cite{Turakainen69A, Turakainen1969}.

\begin{theorem}
    {$\AfLQ \subseteq \Log $.}
\end{theorem}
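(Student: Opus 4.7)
My plan is to defuse the sign problem by a Turakainen-style doubling of the state space, reducing every matrix--vector multiplication to non-negative integer arithmetic; Macarie's Lemma \ref{lemma03} will then let us recover the absolute values at the end, one coordinate at a time. The hard part is exactly what the paragraph right above the theorem isolates: $f_A$ is a ratio of $L^1$ norms of entries of $M_w \vec{x}$ whose signs are a priori unpredictable, and a direct sign-plus-magnitude encoding of individual entries does not survive modular arithmetic. Doubling side-steps this by carrying the positive and negative parts as two parallel non-negative streams and postponing every sign decision to a single bookkeeping step at the end.

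Concretely, given the rational AfA $A = (\vec{x}, \{M_i\}, F)$, first clear denominators by choosing $D \in \N$ with $M_i' := D M_i \in \Z^{k \times k}$ and $\vec{x}' := D \vec{x} \in \Z^k$. Split each object entrywise into its positive and negative parts, $M_i' = M_i^+ - M_i^-$ and $\vec{x}' = \vec{x}^+ - \vec{x}^-$, with $M_i^\pm \in \Z_{\geq 0}^{k \times k}$ and $\vec{x}^\pm \in \Z_{\geq 0}^k$. In the doubled state space $\Z_{\geq 0}^{2k}$, introduce
$$
\tilde{M}_i = \begin{pmatrix} M_i^+ & M_i^- \\ M_i^- & M_i^+ \end{pmatrix},
\qquad
\tilde{\vec{x}} = \begin{pmatrix} \vec{x}^+ \\ \vec{x}^- \end{pmatrix}.
$$
A one-line induction on $|w|$ gives $\tilde{M}_{w_n} \cdots \tilde{M}_{w_1} \tilde{\vec{x}} = (\vec{u}^+, \vec{u}^-)^T$ with $\vec{u}^+ - \vec{u}^- = D^{n+1} M_w \vec{x}$, and every entry of $\vec{u}^\pm$ is a non-negative integer bounded by $C^n$ for a constant $C$ depending only on $A$.

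From here the strategy of Theorem \ref{thm03} applies almost verbatim: pick $r = cn$ so that $P_r$ dominates every intermediate integer, and compute each residue $u^\pm_i \bmod p$ for $p \in \vec{p}_r$ by iterating the non-negative matrix--vector product modulo $p$, reusing space across primes, in total $O(\log p_r) = O(\log n)$ space. Writing the cutpoint $\lambda = \ell / q$ in lowest terms, the acceptance condition $f_A(w) > \lambda$ unfolds as
$$
q \sum_i \delta_i \, |u^+_i - u^-_i| \;>\; \ell \sum_i |u^+_i - u^-_i|.
$$
For each coordinate $i$ separately, applying Lemma \ref{lemma03} to $\res_{\vec{p}_r}(u^+_i)$ and $\res_{\vec{p}_r}(u^-_i)$ decides which is larger and thereby yields the residue representation of $|u^+_i - u^-_i|$ (as either $u^+_i - u^-_i$ or $u^-_i - u^+_i$ componentwise modulo each prime). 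Accumulating in residue form produces two non-negative integers $S_+$ and $S_-$, and one final call to Lemma \ref{lemma03} decides whether $S_+ > S_-$, $S_+ = S_-$, or $S_+ < S_-$. The one thing left to verify is that $r = cn$ can be chosen uniformly so that every partial sum fits in $[0, P_r - 1]$; since each such sum is bounded by a polynomial in $k, \ell, q$ times $C^n$, the estimate $P_r = \tfrac{1}{2} e^{(1+o(1)) r \ln r}$ supplies more than enough room. Division by zero never occurs, because $M_w \vec{x}$ is an affine vector and hence $|M_w \vec{x}| \geq 1$.
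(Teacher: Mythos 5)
Your proof is correct, and it reaches the same endgame as the paper (reduce to coordinate-wise comparisons of non-negative integers known by residue representation, then invoke Lemma~\ref{lemma03}), but by a genuinely different route. The paper follows Turakainen's 1969 construction: it first embeds $M_i$ into a $(k+2)\times(k+2)$ matrix $B_i$ with zero row and column sums, then exploits $B_i\mathbb{E}=\mathbb{E}B_i=\mathbf{0}$ to write $C_i = B_i + m\mathbb{E}$ with non-negative entries and obtain the exact expansion $C_w = B_w + m^{\abs{w}}(k+2)^{\abs{w}-1}\mathbb{E}$; the quantity of interest becomes $\abs{\vec a - \vec b}$ with $\vec a = F'D_w\vec x'$ a non-negative matrix product and $\vec b$ an explicitly known correction term. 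You instead use the classical positive/negative doubling $\tilde M_i = \bigl(\begin{smallmatrix}M_i^+ & M_i^- \\ M_i^- & M_i^+\end{smallmatrix}\bigr)$ on a $2k$-dimensional space, so the two non-negative streams $\vec u^+,\vec u^-$ satisfy $\vec u^+ - \vec u^- = D^{n+1}M_w\vec x$; the block structure $\bigl(\begin{smallmatrix}P&N\\N&P\end{smallmatrix}\bigr)$ is closed under multiplication, which makes the induction immediate. Your version is slightly more elementary and more general in that it never uses the affine (column-sum-one) structure at all---it would apply to any integer-valued generalized automaton---whereas Turakainen's trick is tailored to matrices that can be conjugated to have vanishing row and column sums. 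The trade-off is state-space size ($2k$ versus $k+2$), but that is irrelevant for a log-space bound. One small bookkeeping point worth making explicit: the $k$ sign decisions $u_i^+ \gtrless u_i^-$ should be resolved once and stored (only $O(k)=O(1)$ bits, since $A$ is fixed) before the outer comparison of $S_+$ and $S_-$, so that recomputing $(S_\pm \bmod p)$ on demand inside Lemma~\ref{lemma03} does not trigger a nested re-derivation of the signs; with that caveat the space analysis is clean.
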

\begin{proof}
    For a given alphabet $ \Sigma $, let $L \in \Sigma^* $ be a language in $ \AfLQ $ and $A=(\vec{x},\{M_i\mid i\in\Sigma\},F)$ be a $k$-state rational-valued AfA over $\Sigma$ such that
    \[
        L=\left\{ w\in\Sigma^*\mid f_A(w) > \frac12 \right\}.
    \]
    For each $ M_i \in \Q^{k \times k} $, we define a new matrix as
    $
        B_i=\left(
        \begin{array}{ccc}
            0 & \vec{0}^T & 0\\
            \vec{c}_i & M_i & \vec{0}\\
            e_i & \vec{d}_i^T & 0
        \end{array}
        \right),
    $
    where $\vec{c}_i$, $\vec{d}_i$, and $e_i$ are chosen so that the column and row sums of $B_i$ are zero. We define
    $
        \vec{x}'=\left(\begin{array}{c}0\\ \vec{x}\\ 0\end{array}\right)
    $
    as the new initial state. For the projection matrix $F$, we define an extension
    $
        F'=\left(\begin{array}{ccc} 0 & 0 & 0\\ 0& F & 0 \\ 0 & 0& 0\end{array}\right).
    $
    It is straightforward to see that $\abs{B_w\vec{v}_0'}=\abs{M_wv_0}$ as well as
$\abs{F'B_w\vec{v}_0'}=\abs{FM_wv_0}$.

For the next step, we introduce a $(k+2)\times (k+2)$ matrix $\E$, whose each element is $1$.
It is then clear that $\E^n=(k+2)^{n-1}\E$ and $B_i\E=\E B_i=\mathbf 0$. Now we define
\[
C_i=B_i+m\E,
\]
where $m\in\mathbb Z$ is selected large enough to ensure the nonnegativity of the matrix entries of each $C_i$. It follows that
\[
C_w=B_w+m^{\abs{w}}(k+2)^{\abs{w}-1}\E,
\]
and
\[
C_w\vec{x}'=B_w\vec{x}'+m^{\abs{w}}(k+2)^{\abs{w}-1}\E\vec{x}'.
\]
Similarly,
\[
F'C_w\vec{x}'=F'B_w\vec{x}'+m^{\abs{w}}(k+2)^{\abs{w}-1}F'\E\vec{x}'.
\]
Now
\[
\frac{\abs{FM_w\vec{v}_0}}{\abs{M_w\vec{v}_0}}
=\frac{\abs{F'B_w\vec{v}_0}}{\abs{B_w\vec{v}_0}}
=\frac{\abs{F'C_w\vec{v}_0'-m^{\abs{w}}(k+2)^{\abs{w}-1}F'{\E}\vec{x}'}}{
\abs{C_w\vec{x}'-m^{\abs{w}}(k+2)^{\abs{w}-1}{\E}\vec{x}'}}
\]
which can further be modified by expanding the denominators away: For an integer $g$ large enough
all matrices $D_i=gC_i$ will be integer matrices and the former equation becomes
\begin{equation}\label{eq01}
\frac{\abs{FM_w\vec{x}}}{\abs{M_w\vec{x}}}
=\frac{\abs{F'B_w\vec{x}}}{\abs{B_w\vec{x}}}
=\frac{\abs{F'D_w\vec{x}'-m^{\abs{w}}(k+2)^{\abs{w}-1}g^{\abs{w}+1}F'{\E}\vec{x}'}}{
\abs{D_w\vec{x}'-m^{\abs{w}}(k+2)^{\abs{w}-1}g^{\abs{w}+1}{\E}\vec{x}'}}.
\end{equation}
Hence the inequality
\[
\frac{\abs{FM_w\vec{x}}}{\abs{M_w\vec{x}}}\ge\frac12
\]
is equivalent to
\begin{eqnarray}
& &2\abs{F'D_w\vec{x}'-m^{\abs{w}}(k+2)^{\abs{w}-1}g^{\abs{w}+1}F'{\E}\vec{x}'}\notag\\
&\ge& \abs{D_w\vec{x}'-m^{\abs{w}}(k+2)^{\abs{w}-1}g^{\abs{w}+1}{\E}\vec{x}'}.\label{eq10}
\end{eqnarray}
In order to verify inequality (\ref{eq10}) in logarithmic space, it sufficient to demonstrate that the residue representations of both sides can be obtained in logarithmic space.

For that end, the residue representation of vector $\vec{a}=F'D_w\vec{x}'\in\mathbb R^{k+2}$ can be obtained in logarithmic space as in the proof of Theorem \ref{thm03}.

Trivially, the residue representation of $\vec{b}=m^{\abs{w}}(k+2)^{\abs{w}-1}g^{\abs{w}+1}F'{\E}\vec{x}'\in\mathbb R^{k+2}$ can be found in logarithmic space, as well. In order to compute the residue representation of
\[
\abs{\vec{a}-\vec{b}}=\abs{\vec{a}_1-\vec{b}_1}+\cdots+\abs{\vec{a_k}-\vec{b}_k}
\]
it is sufficient to decide whether $\vec{a}_i\ge\vec{b}_i$ holds. As the residue representations for each $\vec{a}_i$ and $\vec{b}_i$ is known, all the decisions can be made in logspace, according to Lemma \ref{lemma03}. The same conclusion can be made for the right hand side of (\ref{eq10}).
\qed\end{proof}

\section{A Non-affine Language}
As we saw in the previous section, $\AfL_{\mathbb Q}\subseteq\L$, and hence languages beyond $\L$, are good candidates for non-affine languages.\footnote{It is known that $\L\subsetneq\PSPACE$, so it is plausible that $\PSPACE$-complete languages are not in $\AfLQ$.} In this section, we will however demonstrate that the border of non-affinity may lie considerably lower: There are languages in $\L$ which are not affine.

In an earlier work \cite{HiMoYa}, we applied the method of Turakainen \cite{Turakainen1981} to show that there are languages in $\L$ which however are not contained in $\BAfL$. Here we will extend the previous result to show that those languages are not contained even in $\AfLA$. (We leave open whether a similar technique can be applied for $\AfL$.)

\begin{definition}[Lower density]
  \setlength{\belowdisplayskip}{0pt}
  \setlength{\belowdisplayshortskip}{0pt}
  \setlength{\abovedisplayskip}{0pt}
  \setlength{\abovedisplayshortskip}{0pt}

  Let $L\subseteq a^*$ be a unary language. We call \textbf{lower density} of $L$ the limit
  \[ \underline{dens}(L) = \liminf_{n\rightarrow \infty} \frac{\left|\{a^k\in L ~|~ k\leq n\} \right|}{n+1} .\]
\end{definition}

\begin{definition}[Uniformly distributed sequence]
  Let $(\textbf{x}_n)$ be a sequence of vectors in $\R^k$ and $I=[a_1, b_1) \times \dots \times [a_k, b_k)$ be an interval in $\mathbb R^k$. We define $C(I,n)$ as $C(I, n) = \left|\{ \textbf{x}_i \mod 1 \in I ~|~ 1\leq i \leq n \}\right|$.

  We say that \textbf{$(\textbf{x}_n)$ is uniformly distributed mod 1} if and only if for any $I$ of such type,
  \[ \lim_{n\rightarrow \infty} \frac{C(I,n)}{n} = (b_1-a_1)\cdots(b_k-a_k) .\]
\end{definition}

\medskip
\setlength{\belowdisplayskip}{\EquationSpaceGlobal}
\setlength{\belowdisplayshortskip}{\EquationSpaceGlobal}
\setlength{\abovedisplayskip}{\EquationSpaceGlobal}
\setlength{\abovedisplayshortskip}{\EquationSpaceGlobal}

\begin{theorem}
  \label{th:non_affine}
  If $L\subseteq a^*$ satisfies the following conditions:
  \begin{enumerate}
    \item \label{cond:densL} \underline{dens}(L) = 0.
    \item \label{cond:ud}    For all $N\in\N$, there exists $r\in\N$ and an ascending sequence $(m_i)\in\N$ such that
    $a^{r+m_iN}\subseteq L$ and for any irrational number $\alpha$, the sequence $\left( (r+m_iN)\alpha \right)$ is uniformly distributed mod 1.
  \end{enumerate}
  Then $L$ is not in $\AfLA $.
\end{theorem}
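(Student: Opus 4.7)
The plan is to argue by contradiction, extending the bounded-error analysis of~\cite{HiMoYa} to the weaker cutpoint setting. Suppose there is an algebraic-valued AfA $A = (\vec{x}, M, F)$ with cutpoint $\lambda_0 \in [0,1)$ such that $L = \{a^n : f_A(a^n) > \lambda_0\}$. Because the alphabet is unary, everything reduces to the orbit $M^n\vec{x}$, and the strategy is to extract an asymptotic almost-periodic description of $f_A(a^n)$ from the spectrum of $M$, combine it with the uniform-distribution subsequence provided by condition~2, and deduce that $L$ in fact has density~$1$, contradicting condition~1.

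First I would do a spectral analysis of $M$. Let $\rho > 0$ be its spectral radius (the nilpotent case $\rho=0$ is ruled out directly since then $M^n\vec{x}$ is eventually zero). All eigenvalues are algebraic; write each dominant eigenvalue (those $\lambda$ with $|\lambda|=\rho$, forming a set~$\Lambda$) as $\lambda = \rho e^{2\pi i \theta_\lambda}$, and split $\Lambda$ into a rational-argument part $\Lambda_{\Q}$ and an irrational-argument part $\Lambda_I$. Using the Jordan decomposition of $M$, one obtains an integer $d \ge 0$ and continuous non-negative functions $G, H$ on $\mathbb{T}^{|\Lambda|}$ such that
\[
|M^n \vec{x}| = n^d\rho^n\bigl(G((n\theta_\lambda)_\lambda) + o(1)\bigr),\qquad
|FM^n\vec{x}| = n^d\rho^n\bigl(H((n\theta_\lambda)_\lambda) + o(1)\bigr),
\]
so that $f_A(a^n) = \Phi((n\theta_\lambda)_\lambda) + o(1)$ with $\Phi = H/G$ continuous on the open dense set $\{G>0\}$.

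Next, I would choose $N$ as the least common multiple of the denominators of all rational values of $\Z$-linear combinations $\sum_\lambda k_\lambda\theta_\lambda$ (a finite list once a basis of the rational-relations lattice among the $\theta_\lambda$'s is fixed), and apply condition~2 with this $N$ to obtain $r$ and the ascending sequence $(m_i)$. Along $n_i = r + m_iN$, the terms coming from $\Lambda_{\Q}$ contribute constants depending only on~$r$, while each irrational $\theta_\lambda$ gives $(n_i\theta_\lambda)\bmod 1$ equidistributed on $[0,1)$ by condition~2. Via Weyl's criterion applied to every non-zero integer combination $\beta = \sum_\lambda k_\lambda\theta_\lambda$ (either rational, in which case the exponential is constant along the subsequence by the choice of $N$, or irrational, in which case condition~2 with $\alpha=\beta$ gives a vanishing exponential average), I would conclude that $(n_i\theta_\lambda)_{\lambda\in\Lambda_I}\bmod 1$ equidistributes on a coset $\vec{c}_r + T_*$ of an explicit sub-torus $T_*\subseteq\mathbb{T}^{|\Lambda_I|}$.

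Finally, since $a^{n_i}\in L$ gives $f_A(a^{n_i})>\lambda_0$ for all $i$, continuity of $\Phi$ on $\{G>0\}$ and density of the subsequence trajectory force $\Phi\ge\lambda_0$ on all of $\vec{c}_r + T_*$. For generic $n\in\N$, Weyl's theorem applied to the same $\theta_\lambda$'s gives $(n\theta_\lambda)_{\lambda\in\Lambda_I}\bmod 1$ equidistributed on a coset of the same $T_*$, so $f_A(a^n) \ge \lambda_0$ on a set of $n$'s of density~$1$, while $\{n : f_A(a^n) = \lambda_0\}$ has density~$0$ by continuity of $\Phi$; hence $L$ has lower density~$1$, contradicting condition~1. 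The hardest part will be the joint-equidistribution argument via Weyl---in particular, choosing $N$ to kill \emph{all} rational relations among the $\theta_\lambda$'s and not merely the rational $\theta_\lambda$'s themselves---together with ruling out the degenerate case $\Phi\equiv\lambda_0$ on the coset (which does not arise in the bounded-error setting of~\cite{HiMoYa} and here requires a finer quantitative argument on the $o(1)$ term using the strict inequality $f_A(a^{n_i})>\lambda_0$).
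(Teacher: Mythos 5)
Your proposal identifies the correct high-level ingredients (Jordan decomposition of $M$, killing rational phase relations by restricting to an arithmetic progression, Weyl-type equidistribution along $(n_i\theta_\lambda)$, and a density contradiction), but it defers the single hardest step to an afterthought, and that step is exactly where the paper spends most of its effort. Writing $|M^n\vec{x}| = n^d\rho^n(G((n\theta_\lambda)) + o(1))$ with $G$ continuous is fine away from the zero set of $G$, but in the cutpoint setting there is nothing preventing the limiting function $\Phi = H/G$ from being \emph{identically} equal to the cutpoint on the relevant coset, so that membership in $L$ is decided entirely by the $o(1)$ remainder. You flag this as ``the degenerate case'' needing ``a finer quantitative argument,'' but it is the generic obstruction, and handling it forces the structure of the whole proof. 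The paper does this by grouping the $\Lambda$-terms into $A_j(n)+B_j(n)+C_j(n)$, choosing the dominant block $A_j$ to be the \emph{maximal} prefix for which $\sum_{j\in E_a}|A_j|-\sum_{j\in\overline{E_a}}|A_j|$ is constant, and then expanding $|A_j+B_j+C_j|$ to first order around $A_j$ via the lemma $|A+z|=|A|+\re(\frac{|A|}{A}z)+O(z^2/A)$. To control the error one needs a \emph{lower} bound on $|A_j(n)|$; the paper obtains finitely many zeros from Skolem--Mahler--Lech and, crucially, the quantitative bound $|A_j(n)|=\Omega(n^d\lambda^{n-\epsilon})$ from Evertse's $S$-unit theorem. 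That is the only place the algebraicity hypothesis enters --- and your proposal never invokes it, which is a strong signal that the argument as sketched cannot close.

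Two smaller points. First, you claim the contradiction ``$L$ has lower density $1$''; the paper only needs and only gets \emph{positive} lower density, and your stronger claim would additionally require excluding $\{n: f_A(a^n)=\lambda_0\}$ from having positive density, which your continuity argument does not give when $\Phi$ can be constant. Second, your choice of $N$ (lcm of denominators of rational $\Z$-linear combinations of the $\theta_\lambda$) is in spirit the same as the paper's $N=\lcm\{\ord(\lambda_i/\lambda_j):\lambda_i/\lambda_j\text{ a root of unity}\}$, but the paper's version is the one that directly guarantees no $\mu_i/\mu_j$ is a root of unity after the substitution, which is exactly what Skolem--Mahler--Lech and Evertse require; make sure your $N$ achieves the same normalization before invoking those results.
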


\begin{proof}
Let's assume for contradiction that $L\in \AfLA$. Then there exists an AfA $A$ with $s$ states, matrix $M$ and initial vector $\vec{v}$ such that the acceptance value of $A$ is
\begin{equation}\label{eq:acc_prob}
f_A(a^n) = \frac{\abs{PM^n\vec{v}}}{\abs{M^n\vec{v}}} .
\end{equation}
Without loss of generality, we can assume that the cutpoint equals to $\frac12$, and hence
$ w\in L \Leftrightarrow f_A(w)>\frac12 .$

Using the Jordan decomposition $M=PJP^{-1}$, one has $M^n = PJ^nP^{-1}$. So the coordinates of $M^n\vec{v}$ have the form
\begin{equation}
  \label{eq:vi_sum}
  (M^n\vec{v})_j = \sum_{k=1}^s p_{jk}(n)\lambda_k^n,
\end{equation}
where $\lambda_k$ are the eigenvalues of $M$ and $p_{jk}$ are polynomials of degree less than the degree of the corresponding eigenvalue. For short, we denote $F(n)=f_A(a^n)$, and let
$\lambda_k = \abs{\lambda_k} e^{2i\pi\theta_k}$.

When studying expression (\ref{eq:acc_prob}), we can assume without loss of generality, that all numbers $\theta_k$ are irrational. In fact, replacing matrix $M$ with $\alpha M$, where $\alpha\ne 0$ does not change (\ref{eq:acc_prob}), since
\[
\frac{\abs{P(\alpha M)^n\vec{v}}}{\abs{(\alpha M)^n\vec{v}}}
=\frac{\abs{\alpha^nPM^n\vec{v}}}{\abs{\alpha^n M^n\vec{v}}}
=\frac{\abs{PM^n\vec{v}}}{\abs{M^n\vec{v}}}.
\]
Selecting now $\alpha=e^{2\pi i\theta}$ (where $\theta\in\mathbb R$) implies that the eigenvalues of $M$ are $\lambda_k e^{2i\pi(\theta_k+\theta)}$.
The field extension $\mathbb Q(\theta_1,\ldots,\theta_s)$ is finite, and hence there is always an irrational number $\theta\notin\mathbb Q(\theta_1,\ldots,\theta_s)$. It follows directly that
all numbers $\theta_k+\theta$ are irrational. Hence we can assume that all the numbers $\theta_k$ are irrational in the first place.\footnote{Note that the new matrix obtained may not be affine, so it would be wrong to assume that all AfAs have to admit an equivalent one with only irrational eigenvalues. However, this does not affect this proof, since we do not require the new matrix to be affine, we only study the values that the fraction $\frac{\abs{P(\alpha M)^n\vec{v}}}{\abs{(\alpha M)^n\vec{v}}}
=\frac{\abs{PM^n\vec{v}}}{\abs{M^n\vec{v}}}$ take.}

By restricting to an arithmetic progression $n=r+mN$ ($m\in\mathbb N$) we can also assume that
no $\lambda_i/\lambda_j$ is a root of unity for $i\ne j$. In fact, selecting
$N=\lcm\{\ord(\lambda_i/\lambda_j)\mid \text{$i\ne j$ and $\lambda_i/\lambda_j$ is a root of unity}\}$ (\ref{eq:vi_sum}) becomes
\begin{equation}
  \label{eq:vi_sum2}
  (M^{r+mN}\vec{v})_j = \sum_{k=1}^s p_{jk}(r+mN)\lambda_k^r(\lambda_k)^{Nm}
	=\sum_{k=1}^{s'}q_{jk}(m)\mu_k^m,
\end{equation}
where $\{\mu_1,\ldots,\mu_{s'}\}$ are the distinct elements of set $\{\lambda_1^N,\ldots,\lambda_s^N\}$
Now for $i\ne j$ $\mu_i/\mu_j$ cannot be a root of unity, since $(\mu_i/\mu_j)^t=1$ would imply
$(\lambda_{i'}/\lambda_{j'})^{Nt}=1$, which in turn implies $(\lambda_{i'}/\lambda_{j'})^{N}=1$ and hence $\mu_i=\lambda_{i'}^N=\lambda_{j'}^N=\mu_j$, which contradicts the assumption $\mu_i\ne \mu_j$.

We can now write the acceptance condition $f_A(a^n)>\frac12$ equivalently as
\begin{eqnarray*}
& &f_A(a^n)>\frac12
\Leftrightarrow 2\abs{PM^n\vec{v}}>\abs{M^n\vec{v}}\\
&\Leftrightarrow &2\sum_{j\in E_a}\abs{(M^n\vec{v})_j}>\sum_{j\in E}\abs{(M^n\vec{v})_j}
\Leftrightarrow \underbrace{\sum_{j\in E_a}\abs{(M^n\vec{v})_j}-\sum_{j\in\overline{E_a}}\abs{(M^n\vec{v})_j}}_{g(n)}>0,\end{eqnarray*}

Where $E$ is the set of states of $A$, $E_a\subseteq E$ its set of accepting states, and $\overline{E_a}$ the complement of $E_a$.
According to (\ref{eq:vi_sum}), $g(n) := \sum_{j\in E_a}\abs{(M^n\vec{v})_j}-\sum_{j\in\overline{E_a}}\abs{(M^n\vec{v})_j}$ consists of combinations of absolute values of linear combination of functions of type $n^d\lambda^n$.

We say that
$n^{d_1}\lambda_1^n$ is of {\em larger order} than $n^{d_2}\lambda_2^{n}$, if
$\abs{\lambda_1}>\abs{\lambda_2}$; and in the case $\abs{\lambda_1}=\abs{\lambda_2}$, if $d_1>d_2$.
If $\abs{\lambda_1}=\abs{\lambda_2}$, we say that $n^d\lambda_1^n$ and $n^d\lambda_2^n$ and of the same order.
It is clear that if term $t_1(n)$ is of larger order than $t_2(n)$, then $\displaystyle\lim_{n\to\infty}\frac{t_2(n)}{t_1(n)}=0$.

We can organize the terms in expression (\ref{eq:vi_sum}) as
\begin{equation}\label{eq:vi_sum01}
(M^n\vec{v})_j=\sum_{k=1}^s p_{jk}(n)\lambda_k^n=\Lambda^{(N)}_j(n)+\Lambda^{(N-1)}_j(n)+\cdots+\Lambda^{(0)}_j(n),
\end{equation}
where each $\Lambda^{(m)}_j(n)$ consists of terms with equal order multiplier:
\begin{equation}\label{eq:vi_sum02}
\Lambda^{(m)}_j(n)=\sum_{k=1}^{m_j}c_{mk}n^{d_m}{\lambda_{mk}}^n
=n^{d_m}\lambda_m^n\sum_{k=1}^{m_j}c_{mk}e^{2\pi i n\theta_{mk}}
\end{equation}
(for notational simplicity, we mostly omit the dependency on $j$ in the right hand side of (\ref{eq:vi_sum02})).
Here $\lambda_m\in\mathbb R_+$ is the common absolute value of all eigenvalues $\lambda_{mk}=\lambda_me^{2\pi i \theta_{mk}}$,
and expression (\ref{eq:vi_sum01}) is organized in descending order: $\Lambda^{(N)}_j$ is the sum of terms of the highest order multiplier, $\Lambda^{(N-1)}_j$ contains the terms of the second highest order multiplier, etc. We say that $\Lambda^{(k_2)}_j$ is lower than $\Lambda^{(k_1)}_j$ if $k_2<k_1$

We will then fix a representation
\begin{eqnarray}
g(n)&=&\sum_{j\in E_a}\abs{\sum_{k=1}^sp_{jk}(n)\lambda_k^{n}}
-\sum_{j\in \overline{E_a}}\abs{\sum_{k=1}^sp_{jk}(n)\lambda_k^{n}}\notag\\
&=&\sum_{j\in E_a}\abs{A_j(n)+B_j(n)+C_j(n)}-\sum_{j\in \overline{E_a}}\abs{A_j(n)+B_j(n)+C_j(n)}\label{eq:vi_sum3},
\end{eqnarray}
where $A_j(n)+B_j(n)+C_j(n)$ is a grouping of all $\Lambda$-terms in (\ref{eq:vi_sum01}) defined as follows:

\begin{enumerate}
\item $\displaystyle A_j(n)=\sum_{k=0}^m\Lambda_j^{(N-k)}(n)$, where $m\in[-1,N]\cap\mathbb Z$ is chosen as the maximal number so that
\begin{equation}\label{eq:vi_sum4}
A=\sum_{j\in E_a}\abs{A_j(n)}-\sum_{j\in \overline{E_a}}\abs{A_j(n)}
\end{equation}
is a constant function $\mathbb N\to\mathbb R$. Such an $m$ exists, since for $m=-1$, the sum is regarded empty and $A_j(n)=0$, but for $m=N$, all $\Lambda$-terms are included, and then (\ref{eq:vi_sum4}) becomes $f_A(a^n)$, which is not constant (otherwise condition 1 or 2 of the theorem would be false).
\item $B_j(n)$ consists a single $\Lambda$-term immediately lower than those in $A_j(n)4$, and
\item $C_j(n)$ contains the rest of the $\Lambda$-terms, lower than $B_j(n)$
\end{enumerate}
\begin{lemma}
    \label{approximation}
    If $A\ne 0$, then
    $
        \forall z\in\mathbb{C},
        \abs{A+z}=\abs{A}+\re{\dfrac{\abs{A}}{A}}z+O(\dfrac{z^2}{A}) .
    $
\end{lemma}
\begin{proof}
Denote $z=x+iy$. Because $\abs{\re z}\le\abs{z}$, we have
\begin{eqnarray*}\abs{1+z}&=&\abs{1+x+iy}=\sqrt{(1+x)^2+y^2}=\sqrt{1+2\re z+\abs{z}^2}\\
&=&1+\re{z}+O(z^2).
\end{eqnarray*}
Now
\[
\abs{A+z}=\abs{A}\abs{1+\frac{z}{A}}=\abs{A}\big(1+\re\frac{z}{A}+O(\big(\frac{z}{A}\big)^2)\big)
=\abs{A}+\re{\frac{\abs{A}}{A}z}+O(\frac{z^2}{A}).
\]
\qed
\end{proof}
We choose $\lambda\in\mathbb R_+$ and $d$ so that the highest $\Lambda$-term in $B(n)$ is of order $n^d\lambda^n$ and define
$A^{\prime}_j(n)=n^{-d}\lambda^{-n}A_j(n)$, $B^{\prime}_j(n)=n^{-d}\lambda^{-n}B_j(n)$, $g^{\prime}(n)=g(n)n^{-d}\lambda^{-n}$. Then clearly $g^{\prime}(n)>0$ if and only if $g(n)>0$ and each $B_j(n)$ remains bounded as $n\to\infty$. To simplify the notations, we omit the primes and recycle the notations to have a new version of $g(n)$ of (\ref{eq:vi_sum3}) where $A_j$-terms may tend to infinity but $B_j$-terms remain bounded.

Recall that we may assume (by restricting to a arithmetic progression) that no $\lambda_i/\lambda_j$ is a root of unity. By Skolem-Mahler-Lech theorem \cite{Hansel}, this implies that functions $A_j$ can have only a finite number of zeros, and in the continuation we assume that $n$ is chosen so large that no function $A_j$ becomes zero. Furthermore, by the main theorem of \cite{Everetse}, then $\abs{A_j(n)}=\Omega(n^d\lambda^{n-\epsilon})$ for each $\epsilon>0$.\footnote{This is the only point we need the assumption that the matrix entries are algebraic.} As each $B_j$ remains bounded, we find that  $B_j^2/A_j$ tend to zero as $n\to\infty$, and hence by Lemma \ref{approximation}, defining
\begin{eqnarray*}
& &g_1(n)=\\
& &\sum_{j\in E_a}\Big(\abs{A_j(n)}+\re(\frac{\abs{A_j(n)}}{A_j(n)}B_j(n))\Big)-
\sum_{j\in \overline{E_a}}\Big(\abs{A_j(n)}+\re(\frac{\abs{A_j(n)}}{A_j(n)}B_j(n))\Big)\\
&=&\underbrace{\sum_{j\in E_a}\abs{A_j(n)}-\sum_{j\in \overline{E_a}}\abs{A_j(n)}}_{h(n)}+\sum_{j\in E_a}\re(\frac{\abs{A_j(n)}}{A_j(n)}B_j(n))
+\sum_{j\in \overline{E_a}}\re(\frac{\abs{A_j(n)}}{A_j(n)}B_j(n))
\end{eqnarray*}
we have a function $g_1(n)$ with the property $g_1(n)-g(n)\to 0$ ($C$-terms are lower than $B$-terms, so they can be dropped without violating this property), when $n\to\infty$. Also by the construction it is clear that $h(n)=C\cdot n^d\lambda^n$, where $C$ is a constant, and by the conditions of the theorem, this is possible only if $C=0$.

Notice tat $g_1(n)$ is not a constant function by construction.
Also, each $B_j$ is a linear combination of functions of form $e^{2\pi i\theta_k n}$, each $\theta_k$ can be assumed irrational, and $\abs{\abs{A_j(n)}{A_j(n)}=1}$, so we can conclude that $g_1(n)$ is a continuous function formed of terms of form $ce^{i\theta_kn}$ and of ratios $\abs{A_j}/A_j$. In these terms, however the behaviour is asymptotically determined by the highest $\Lambda$-terms, so the conclusion remains even if we drop the lower terms.

By assumption, for all $k$, the sequence $(r+mN)\theta_k$ is uniformly distributed modulo 1. It follows that the values $e^{2i\pi (r + m N)\theta_k}$ are dense in the unit circle. If for some $m$, $g_1(r+mN) < 0$, then $g_1(r+Nm) \leq -\varepsilon$ for some $\epsilon>0$.
Then, because of the density argument, there are arbitrarily large values of $i$ for which $g_1(r+m_iN)\leq 0$ contradicting condition \ref{cond:ud} of the statement. Hence $g_1(r+mN)\ge 0$ for each $m$ large enough. As $g_1$ is not a constant, there must be some
$m_0$ so that $g_1(m_0)\ge\epsilon>0$.

Next, let $R(x_1,\ldots,x_s)$ be a function obtained from $g_1$ by replacing each occurrence of $e^{i\theta_kn}$ by a variable $x_k$, hence each $x_k$ will assume its value in the unit circle. Moreover, by the assumptions of the theorem, the values of $x_k$ will be uniformly distributed in the unit circle.

Note that $g_1(n) = R((e^{2i\pi (r + m_i N)\theta_k})_{k\in A})$. Then, because the sequences $((r+m_iN)\theta_k)_i$ are uniformly distributed modulo 1, it follows that any value obtained by the function $R((e^{2i\pi y_k})_{k\in A})$ can be approximated by some $g_1(r+m_i M)$ with arbitrary precision.
The function $R$ is continuous, therefore there exists an interval $I=(x_1, y_1, ...) = ((x_k, y_k))_{k\in A}$ on which $R((x_k)) > \frac{\varepsilon}{2}$. So, if $m_i$ is large enough and satisfies
\[
  \left( (r+m_iN)\theta_1 \mod 1, \dots \right) = \left( (r+m_iM)\theta_k \mod 1 \right)_{k\in A} \in I  ,
\]
then $g_1(r+m_iN) > \frac{\varepsilon}{2}$, which implies $f_A(r+m_iN) > 0$ and hence $a^{r+m_iN}\in L$. Now we just have to prove that the sequence $(r+m_iN)$ is "dense enough" to have $\underline{dens}(L) > 0$, contradicting again condition \ref{cond:densL}.\\
Then, because of uniform distribution imposed by condition \ref{cond:ud}, one has
\[
  d = \lim_{i\rightarrow\infty} \frac{C(I, r+mN)}{r+mN} = \prod_{k\in A} (y_k - x_k)
\]
And so for $i$ large enough, $\frac{C(I, r+m_iN)}{r+m_iN} \geq \frac{d}{2}$, with $a^{h+n_iQ}\in L$, implying $\underline{dens}(L) > 0$, a contradiction.
\qed\end{proof}

\begin{corollary}
  \label{coro:poly_non_affine}
  Let $P$ be any polynomial with nonnegative coefficients and ${\deg(P) > 2}$.
  The language $\{ a^{P(n)} ~|~ n\in \N \}$ is not in $ \AfLA $.
\end{corollary}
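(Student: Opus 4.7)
The plan is to apply Theorem~\ref{th:non_affine} directly, by verifying both of its hypotheses for $L = \{a^{P(n)} \mid n \in \mathbb{N}\}$. Write $P(n) = a_d n^d + \cdots + a_0$ with $d = \deg P \geq 3$, $a_d > 0$, and (nonnegative, integer) coefficients $a_0, \ldots, a_d \in \mathbb{N}$. Condition~1 will be immediate: since $P$ is eventually strictly increasing, $|\{a^k \in L : k \leq n\}| = |\{m \in \mathbb{N} : P(m) \leq n\}| = O(n^{1/d})$, and therefore $\underline{dens}(L) \leq \lim_n O(n^{1/d})/(n+1) = 0$ because $d \geq 3$.

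For condition~2, fix $N \in \mathbb{N}$. The key arithmetic observation is the identity $P(n+N) \equiv P(n) \pmod{N}$, valid because $P$ has integer coefficients (if one prefers to allow rational coefficients with common denominator $q$, replace the progression step by $qN$; nothing else changes). Setting $r = P(0) \bmod N$ and $m_i = (P(iN) - r)/N$, one checks that $m_i \in \mathbb{N}$, that $(m_i)$ is eventually strictly increasing, and that $a^{r + m_i N} = a^{P(iN)} \in L$ by construction. It then remains only to show that for every irrational $\alpha$, the sequence $\left((r + m_i N)\alpha\right)_i = \left(\alpha P(iN)\right)_i$ is uniformly distributed modulo $1$.

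This last step is the only one that is not pure bookkeeping, and it will follow from Weyl's equidistribution theorem for polynomial sequences. Viewed as a polynomial in $i$, $\alpha P(iN)$ has degree $d$ and leading coefficient $\alpha a_d N^d$; since $a_d N^d \in \mathbb{Q}^{\times}$, this coefficient is irrational whenever $\alpha$ is. Weyl's criterion then yields equidistribution modulo $1$, establishing condition~2. The main (essentially only) obstacle is thus this appeal to Weyl's theorem, together with the mild reduction to a single residue class modulo $N$; everything else is elementary arithmetic. With both hypotheses verified, Theorem~\ref{th:non_affine} gives $L \notin \AfLA$, as desired.
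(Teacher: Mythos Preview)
Your proposal is correct and follows the same approach as the paper: verify the two hypotheses of Theorem~\ref{th:non_affine} and conclude. The paper itself does not spell out these verifications but simply cites Turakainen~\cite{Turakainen1981} for the fact that $\{a^{P(n)}\}$ satisfies both conditions; your argument (density $O(n^{1/d})$ for condition~1, reducing to a single residue class mod $N$ and invoking Weyl's equidistribution theorem for polynomials with irrational leading coefficient for condition~2) is exactly the direct route that Turakainen's paper provides.
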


\begin{corollary}
  \label{coro:prime_non_affine}
  The language $\{ a^p ~|~ p \text{ prime} \}$ is not in $\AfLA$.
\end{corollary}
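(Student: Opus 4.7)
The plan is to apply Theorem~\ref{th:non_affine} to $L = \{a^p \mid p \text{ prime}\}$, which reduces the corollary to verifying its two hypotheses.

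For condition~\ref{cond:densL}, I would invoke the prime number theorem: since $\pi(n) \sim n/\ln n$, we have $\pi(n)/(n+1) \to 0$, and hence $\underline{dens}(L) = 0$.

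For condition~\ref{cond:ud}, given any $N \in \N$, I would take $r = 1$ (which is coprime to $N$) and let $(m_i)_i$ be the ascending sequence for which $1 + m_i N$ enumerates the primes congruent to $1$ modulo $N$. Dirichlet's theorem on primes in arithmetic progressions guarantees that infinitely many such primes exist, so the sequence is well defined and $a^{r + m_i N} \in L$ by construction.

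The main obstacle is the equidistribution requirement: for every irrational $\alpha$ the sequence $((1 + m_i N)\alpha)_i$ must be uniformly distributed modulo $1$. By Weyl's criterion this reduces to showing, for every nonzero integer $h$, the exponential-sum estimate
\[
\sum_{\substack{p \le X \\ p \equiv 1 \pmod N}} e^{2\pi i h p \alpha} = o\!\left(\frac{X}{\log X}\right)
\]
as $X \to \infty$. This is the standard arithmetic-progression refinement of Vinogradov's $1937$ bound on $\sum_{p \le X} e^{2\pi i p \alpha}$ for irrational $\alpha$, derived by combining his exponential-sum method with Siegel--Walfisz-type control of primes in arithmetic progressions; I would cite this classical result rather than reprove it. Once this is in hand, Theorem~\ref{th:non_affine} immediately yields $L \notin \AfLA$.
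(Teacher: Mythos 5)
Your proof is correct and follows the same reduction as the paper's: apply Theorem~\ref{th:non_affine} and verify its two hypotheses for $L=\{a^p\mid p\text{ prime}\}$. The difference is that the paper disposes of the verification in one line by citing Turakainen~\cite{Turakainen1981}, whereas you unpack what that citation actually requires. Your ingredients --- the prime number theorem for condition~\ref{cond:densL} (Chebyshev's bound $\pi(n)=O(n/\log n)$ already suffices here), Dirichlet's theorem to make the progression $\{1+mN\}$ contain infinitely many primes, and Weyl's criterion together with the arithmetic-progression refinement of Vinogradov's exponential-sum estimate for condition~\ref{cond:ud} --- are precisely the classical number-theoretic results standing behind Turakainen's claim, so the spelled-out argument is sound and faithful to what the paper's citation is delegating. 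One small remark: the choice $r=1$ works for every $N$ since $\gcd(1,N)=1$, which is what makes the single uniform choice of residue class possible; it is worth saying this explicitly since condition~\ref{cond:ud} quantifies over all $N$.
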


\begin{proof}[Proof of Corollary \ref{coro:poly_non_affine} and Corollary \ref{coro:prime_non_affine}.]
  Turakainen proved that these two languages satisfies the two conditions of Theorem \ref{th:non_affine} \cite{Turakainen1981}.
  Therefore, these two languages not in $\AfLA$.
\qed\end{proof}

\section*{Acknowledgments}

Yakary\i lmaz was partially supported by Akad\={e}misk\={a} person\={a}la atjaunotne un kompeten\v{c}u pilnveide Latvijas Universit\={a}t\={e} l\={\i}g Nr. 8.2.2.0/18/A/010 LU re\c{g}istr\={a}cijas Nr. ESS2018/289 and ERC Advanced Grant MQC. Hirvensalo was partially supported by the Väisälä Foundation and Moutot by ANR project CoCoGro (ANR-16-CE40-0005).

\bibliographystyle{splncs04}
\bibliography{affine}

\begin{flushright}

\end{flushright}


\end{document}